\crefname{algocf}{algorithm}{algorithms}
\setlist[itemize]{noitemsep, nolistsep}
\setlist[enumerate,1]{label=(\arabic*)}
\def\dd{\mathinner{.\,.}} 
\newcommand{\cO}{\mathcal{O}}
\newcommand{\N}{\mathbb{N}}
\newcommand{\R}{\mathbb{R}}
\newcommand{\howfar}{\varepsilon}
\newcommand{\setsize}[1]{\left|#1\right|}
\DeclareMathOperator{\qpSet}{\mathrm{QP}_\Sigma}
\DeclareMathOperator{\selfoverlap}{\mathrm{PS}}
\newcommand{\pre}{\textrm{pre}}
\newcommand{\post}{\textrm{post}}
\newcommand{\yes}{\textsf{YES}}
\newcommand{\no}{\textsf{NO}}
\newtheorem{theorem}{Theorem}
\newtheorem{definition}[theorem]{Definition}
\newtheorem{lemma}[theorem]{Lemma}
\newtheorem{corollary}[theorem]{Corollary}
\newtheorem{observation}[theorem]{Observation}
\newtheorem{example}[theorem]{Example}
\crefname{observation}{Observation}{Observations}
\author[1]{Christine Awofeso}
\author[2,3]{Ben Bals}
\author[1]{Oded Lachish}
\author[2,3]{Solon P.\ Pissis}
\affil[1]{Birkbeck, University of London, London, UK}
\affil[2]{CWI, Amsterdam, The Netherlands}
\affil[3]{Vrije Universiteit, Amsterdam, The Netherlands}
\begin{document}

\title{Testing Quasiperiodicity}

\maketitle

\begin{abstract}
A cover (or quasiperiod) of a string $S$ is a shorter string $C$ such that every position of $S$ is contained in some occurrence of $C$ as a substring. The notion of covers was introduced by Apostolico and Ehrenfeucht over 30 years ago [Theor.~Comput.~Sci.~1993] and it has received significant attention from the combinatorial pattern matching community. In this note, we show how to efficiently test whether $S$ admits a cover. 
Our tester can also be translated into a streaming algorithm.
\end{abstract}

\section{Introduction}

A cover (or quasiperiod) of a string $S$ is a shorter string $C$ such that every position of $S$ is contained in some occurrence of $C$ as a substring. The notion of covers generalizes the notion of period. It was introduced by Apostolico and Ehrenfeucht in 1993~\cite{DBLP:journals/tcs/ApostolicoE93}, and since then it has received a lot of attention from the combinatorial pattern matching community.
For example, the shortest cover of a string of length $n$ can be computed in $\cO(n)$ time~\cite{DBLP:journals/ipl/ApostolicoFI91,DBLP:journals/ipl/Breslauer92}; see~\cite{DBLP:journals/tcs/CzajkaR21,DBLP:journals/fuin/MhaskarS22} for surveys.

Our main result here is a tester to determine whether a string $S$ of length $n$ has a cover of length at most $q$ or the minimum Hamming distance of $S$ and a string that has such a cover is at least $\howfar n$, for some small $\howfar \in \R^+$. The tester does not access $S$ directly and instead uses queries to an oracle of the form: \emph{what is the letter at position $i\in [n]$ of $S$?}
Our algorithm uses $\cO(q^3 \howfar^{-1}\log q )$ such queries, which is independent of $n$; see \cref{sec:tester}. Notably, our combinatorial insights yield a simple streaming algorithm for short covers; see \cref{sec:stream}. We start with \cref{sec:prel}, which provides the necessary notation, definitions, and tools.
Our work proceeds along the lines of~\cite{lachish_testing_2011}, where the authors provide testers for periodicity.

\section{Preliminaries}\label{sec:prel}

We consider finite strings on an integer \emph{alphabet} $\Sigma=[\sigma]=\{1,2,\ldots,\sigma\}$. 
The elements of $\Sigma$ are called \emph{letters}.
For a string $S = S[1]\cdots S[n]$ on $\Sigma$,
its \emph{length} is $|S| = n$. 
For any $1 \leq i \leq j \leq n$, the string $S[i] \cdots S[j]$ is called a \emph{substring} of $S$. By $S[i\dd j]$, we denote its occurrence at the (starting) position $i$, and we call it a \emph{fragment} of $S$. 
When $i = 1$, this fragment is called a
\emph{prefix}, and when $j = n$, it is called a \emph{suffix}.
The \emph{Hamming distance} of two strings $S, S'\in \Sigma^n$ is $|\{i \in [n] : S[i] \neq S'[i]\}|$.
An integer $p$, $1 \leq p \leq n$, is a \emph{period} of a string $S$ if $S[i] = S[i + p]$, for all $1 \leq i \leq |S|-p$. We say that $B$ is a \emph{border} of $S$ if it is a prefix and a suffix of $S$.
 The notion of cover generalizes the notion of period.

\begin{definition}[Cover and Quasiperiod]
    \label{def:cover}
    A string $C$ is a \emph{cover} of a string $S$ if every position in $S$ lies within an occurrence of $C$ as a substring; that is, for all $i \in [|S|]$, there is an occurrence of $C$ in $S$ that starts at one of $\max(i - |C| +1, 1), \dd, i$. If $C$ is a cover of $S$, we say that $S$ has a \emph{quasiperiod} $\setsize{C}$. 
\end{definition}

\begin{example}
$C=\texttt{aba}$ and $C'=\texttt{abaaba}$ are covers of $S=\texttt{abaababaababaaba}$.
\end{example}

For any $q\in \N$, by $\qpSet(q)$ we denote the set of all strings on $\Sigma$ with a quasiperiod at most $q$.
Since every cover of a string must be a border, any string $S\in\Sigma^n$ has $\cO(n)$ covers, and, in fact, all covers of $S$ can be computed in $\cO(n)$ time~\cite{DBLP:journals/ipl/MooreS94,DBLP:journals/ipl/MooreS95}.
The notion of seed~\cite{DBLP:journals/algorithmica/IliopoulosMP96} generalizes the notion of cover.

\begin{definition}[Seed]
A string $C$ is a \emph{seed} of $S$ if $|C| \leq |S|$ and $C$ is a cover of some string containing $S$ as a substring.
\end{definition}

\begin{example}
$C=\texttt{aba}$ and $C'=\texttt{abaab}$ are seeds of $S=\texttt{aabaababaababaabaa}$.    
\end{example}

Unlike covers, the number of distinct seeds of $S\in\Sigma^n$ can be $\Theta(n^2)$~\cite{DBLP:journals/talg/KociumakaKRRW20}.
\cref{thm:fast-seeds} allows checking whether any fragment of $S$ is a seed efficiently.

\begin{theorem}[\cite{DBLP:conf/soda/KociumakaKRRW12,DBLP:conf/esa/Radoszewski23}]
\label{thm:fast-seeds}
An $\cO(n)$-size representation of all seeds of a string $S\in\Sigma^n$ can be computed in $\cO(n\log n)$ time and $\cO(n)$ space. If $\Sigma=[\sigma]$, with $\sigma=n^{\cO(1)}$, the same representation can be computed in $\cO(n)$ time.    
\end{theorem}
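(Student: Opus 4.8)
The plan is to replace the existential definition of a seed by a local condition that can be tested on the suffix tree of $S$, and then to evaluate that condition for all candidate seeds simultaneously. Write $n=|S|$ and let $u$ be a candidate of length $k\le n$. Since a seed is a cover of some superstring $W$ of $S$, and extending $W$ by more than $k-1$ letters on either side is useless, I may assume that $W$ begins and ends with $u$ and overhangs $S$ by at most $k-1$ positions on each side. Projecting the occurrences of $u$ in $W$ back onto $S$ yields the following characterisation: letting $p_1<\dots<p_r$ be the occurrences of $u$ lying entirely inside $S$, the string $u$ is a seed of $S$ iff (i) $\maxgap$ of these occurrences is at most $k$, i.e.\ $p_{t+1}-p_t\le k$ for all $t$; (ii) the prefix $S[1\dd p_1-1]$ is a suffix of $u$ (the left overhang); and (iii) the suffix $S[p_r+k\dd n]$ is a prefix of $u$ (the right overhang). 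The degenerate case $r=0$, in which $u$ occurs nowhere in $S$ and must be covered by two overhangs alone, forces $k>n/2$; there are only $\cO(n)$ such lengths, each handled directly by matching a prefix of $S$ against a suffix of $S$.

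Next I would build the suffix tree $\mathcal{T}$ of $S$, in $\cO(n)$ time when $\Sigma=[\sigma]$ with $\sigma=n^{\cO(1)}$ and in $\cO(n\log n)$ time in the comparison model. Every candidate $u$ that occurs in $S$ is an explicit or implicit locus of $\mathcal{T}$, and its occurrences $p_1<\dots<p_r$ are exactly the leaves below that locus. To decide (i) I need, for every locus, the value $\maxgap$ together with $p_1$ and $p_r$, which I obtain by a bottom-up traversal. Conditions (ii) and (iii) are each a single longest-common-extension comparison: (ii) holds iff the length-$(p_1-1)$ suffix of $u$ equals $S[1\dd p_1-1]$, i.e.\ iff the prefix $S[1\dd p_1-1]$ occurs at position $k+1$; and (iii) holds symmetrically iff the suffix $S[p_r+k\dd n]$ occurs at position $p_1$. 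I decide these with a longest-common-extension structure on $S$ built in $\cO(n)$ time.

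To output a compact description I exploit that along a single edge of $\mathcal{T}$ the occurrence set, and hence $\maxgap$, $p_1$ and $p_r$, stays fixed while only $k$ varies over the edge. Condition (i) then becomes the threshold $k\ge\maxgap$, contributing one length interval per edge. Conditions (ii) and (iii) restrict $k$ further, each demanding that a fixed prefix (respectively suffix) of $S$ occur at a position that moves with $k$; these restrictions are read off from the occurrence and LCE structure. Collecting, for each locus, the lengths surviving all three conditions gives the representation as a list of (locus, length-range) pairs, and I would bound its total size by $\cO(n)$ through a charging argument against the $\cO(n)$ runs of $S$; membership of an arbitrary fragment $S[i\dd j]$ is then answered by locating its locus and testing the stored ranges.

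The main obstacle is computing $\maxgap$ at every locus within the stated time bounds, since evaluating each occurrence set separately is quadratic. I would fold it into the bottom-up traversal, merging the sorted occurrence sets of the children of each node while maintaining the largest consecutive gap; implementing the merges with a balanced search tree gives the $\cO(n\log n)$ bound immediately, which already suffices for the comparison model. Attaining the $\cO(n)$ bound for $\sigma=n^{\cO(1)}$ is the genuinely delicate step: the generic merge must be replaced by an amortised scheme that charges each position to a single suffix-link step, so that the total number of gap updates is linear. A secondary, combinatorial subtlety is verifying that the overhang conditions (ii) and (iii) do not inflate the representation beyond $\cO(n)$. It is in these two points, rather than in the characterisation or the data-structure plumbing, that the real work of the proof concentrates.
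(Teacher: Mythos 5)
First, a structural point: the paper does not prove \cref{thm:fast-seeds} at all --- it is imported as a black box from the two cited works, and the content of the statement \emph{is} the main result of those papers. So you are not reconstructing a short argument; you are sketching an entire research paper, and the sketch breaks down at concrete points. The most tangible one is that your characterization of seeds is wrong. Your condition (ii) demands that the suffix of $u$ of length \emph{exactly} $p_1-1$ equal $S[1\dd p_1-1]$, but an overhanging occurrence may cover strictly more than the prefix $S[1\dd p_1-1]$; the correct condition is that \emph{some} suffix of $u$ of length at least $p_1-1$ be a prefix of $S$ (and symmetrically on the right). Counterexample: $S=\texttt{abab}$, $u=\texttt{bab}$. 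The only full occurrence of $u$ in $S$ is at $p_1=2$, and $u$ is a seed of $S$, as witnessed by $W=\texttt{babab}$ (occurrences of $u$ at positions $1$ and $3$ of $W$ cover it); yet $S[1\dd 1]=\texttt{a}$ is not a suffix of $u$, so your test rejects $u$. Repairing this replaces your single longest-common-extension test by a prefix-suffix (border-type) condition, and then the set of candidate lengths along one suffix-tree edge that survive (ii) and (iii) need no longer be a single interval --- which also undermines your ``one length-range per edge, hence $\cO(n)$ total size'' accounting. Your proposed charging of the representation size ``against the $\cO(n)$ runs of $S$'' is a hope, not an argument; runs play no identified role in your construction.

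Second, even granting a corrected characterization, you explicitly defer the two steps that you yourself identify as ``the real work'': computing $\maxgap$ at every suffix-tree locus within the stated time bounds, and bounding the output representation by $\cO(n)$. Saying that the generic merge ``must be replaced by an amortised scheme that charges each position to a single suffix-link step'' names a goal, not a method --- that scheme is precisely the technical core of the cited SODA 2012 algorithm and its ESA 2023 streamlining, and it occupies most of those papers. Additionally, under this paper's definition a seed need not occur in $S$ at all (your $r=0$ case), and such seeds are not loci of the suffix tree of $S$, so ``handling each length directly'' does not yield a representation of all of them. In summary: the paper proves nothing here (it cites), and your attempt, while it starts from a reasonable plan (suffix tree plus $\maxgap$ plus overhang conditions, which is indeed the classical starting point), is not a proof --- it is incorrect where it is precise, and silent where the difficulty lies.
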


We use simple tools from Diophantine number theory in our analysis.

\begin{definition}[Conical Combination]
A \emph{conical combination} of the natural numbers $a_1, \dots, a_k$ is a number $n=x_1 a_1 + \dots, x_k a_k$, for some $x_1, \dots, x_k \in \N$. 
\end{definition}

This well-known result of Erd\H{o}s and Graham underlies our algorithm.%
\footnote{In particular, the result for 6, 9, and 20 is famous as the Chicken McNugget theorem.}

\begin{theorem}[Frobenius Number Bound, \cite{erdos1972linear}]\label{thm:Frobenius}
Let $a_1 < \dots < a_k \in \N^+$ be set-wise co-prime (i.e., $\gcd(a_1, \dots, a_k) = 1$). Any number $n > 2a_{k-1} \lfloor a_k/k\rfloor - a_k$ can be written as $n = x_1 a_1 + \dots + x_k a_k$, for some $x_1, \dots, x_k \in \N$.
\end{theorem}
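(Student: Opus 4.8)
The plan is to work with the Frobenius number $g:=g(a_1,\dots,a_k)$, the largest integer with no representation as a conical combination of $a_1,\dots,a_k$; the theorem is equivalent to $g \le 2a_{k-1}\lfloor a_k/k\rfloor - a_k$. First I would reduce the statement to a purely combinatorial claim about residues modulo $a_k$. For each residue $r\in\{0,\dots,a_k-1\}$ let $f(r)$ be the smallest conical combination of $a_1,\dots,a_{k-1}$ that is congruent to $r$ modulo $a_k$; this is well defined because $\gcd(a_1,\dots,a_k)=1$ forces $a_1,\dots,a_{k-1}$ to generate the additive group $\mathbb{Z}/a_k\mathbb{Z}$. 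A standard argument (the Apéry-set reduction) shows that an integer $n$ is representable exactly when $n\ge f(n \bmod a_k)$: if $n\ge f(r)$ with $r=n\bmod a_k$, then $n-f(r)$ is a nonnegative multiple of $a_k$, and conversely any representation of $n$ dominates $f(r)$. Hence $g=\max_r f(r)-a_k$, and it suffices to prove $\max_r f(r)\le 2a_{k-1}\lfloor a_k/k\rfloor$.

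Since each of $a_1,\dots,a_{k-1}$ is at most $a_{k-1}$, I would bound $f(r)\le a_{k-1}\cdot \ell(r)$, where $\ell(r)$ is the least number of generators (with repetition) from $\{a_1,\dots,a_{k-1}\}$ whose sum is $\equiv r \pmod{a_k}$. The crux is then the claim that $\ell(r)\le 2\lfloor a_k/k\rfloor$ for every $r$, which I would establish additively in $G=\mathbb{Z}/a_k\mathbb{Z}$. Put $B=\{0,a_1,\dots,a_{k-1}\}$, a set of exactly $k$ residues, and for $t=\lfloor a_k/k\rfloor$ let $R_t=tB$ be the $t$-fold sumset (equivalently, the residues reachable using at most $t$ generators). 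Two facts combine to finish: a pigeonhole lemma stating that any $X\subseteq G$ with $\setsize{X}>a_k/2$ satisfies $X+X=G$ (for each $g$, the sets $X$ and $g-X$ are both larger than half of $G$, so they intersect), and a sumset growth bound $\setsize{R_t}\ge\min(a_k,\,(k-1)t+1)$. If the minimum equals $a_k$ we are already done with $\ell(r)\le t$; otherwise $\setsize{R_t}>a_k/2$, the pigeonhole lemma gives $R_t+R_t=G$, and every residue is reached with at most $2t$ generators, i.e.\ $\ell(r)\le 2t=2\lfloor a_k/k\rfloor$.

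Assembling the pieces gives $g=\max_r f(r)-a_k\le a_{k-1}\cdot 2\lfloor a_k/k\rfloor - a_k$, which is the claimed bound, so every $n$ strictly above it is representable. Two points need care. The routine one is verifying $(k-1)\lfloor a_k/k\rfloor+1>a_k/2$: writing $a_k=qk+s$ with $q=\lfloor a_k/k\rfloor\ge 1$ (as $a_k\ge k$) and $0\le s<k$, this reduces to $q(k-2)+2>s$, which holds since $s<k$ and $q\ge 1$. The genuine obstacle is the sumset growth bound over the \emph{composite} modulus $a_k$: the clean per-step increment $\setsize{R_{s+1}}\ge\min(a_k,\,\setsize{R_s}+k-1)$ is exactly Cauchy--Davenport for prime moduli but can fail modulo a composite, so one must either invoke Kneser's theorem (the stabilizer of a non-full sumset is a proper subgroup, which cannot persist because $B$ generates $G$) or replace it with a direct elementary argument that the reachable sets strictly gain at least $k-1$ residues per step until they cover more than half of $G$. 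I expect this additive step to be where essentially all the difficulty, and the appearance of the factor $1/k$, resides.
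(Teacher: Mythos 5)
The paper itself contains no proof of this statement: it is quoted directly from Erd\H{o}s and Graham \cite{erdos1972linear}, so your proposal has to stand on its own. Its outer skeleton is sound: the Ap\'ery-set reduction (an integer $n$ is representable iff $n \ge f(n \bmod a_k)$, hence the Frobenius number equals $\max_r f(r) - a_k$), the bound $f(r) \le a_{k-1}\cdot\ell(r)$, the pigeonhole lemma ($|X| > a_k/2$ implies $X+X=G$), and the arithmetic check $(k-1)\lfloor a_k/k\rfloor + 1 > a_k/2$ are all correct. But everything then rests on the sumset growth bound $|R_t| \ge \min\bigl(a_k, (k-1)t+1\bigr)$, and that bound is false. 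Take $a_k = 12$, $k = 4$, $(a_1,a_2,a_3) = (1,6,7)$, so $B = \{0,1,6,7\}$ in $\mathbb{Z}/12\mathbb{Z}$ and $\gcd(a_1,\dots,a_4)=1$. Then $R_2 = B+B = \{0,1,2,6,7,8\}$ has size $6 < 7 = (k-1)\cdot 2 + 1$, and $R_3 = \{0,1,2,3,6,7,8,9\}$ has size $8$: the growth is only $2$ per step, not $k-1 = 3$.

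The same example refutes your proposed Kneser repair: the stabilizer of $R_t$ is the proper subgroup $H = \{0,6\}$ for $t = 2,3,4$, so a nontrivial stabilizer of a non-full sumset \emph{can} persist across many steps. The hypothesis that $B$ generates $G$ only rules out stalling ($R_{t+1} = R_t \subsetneq G$); it does not restore per-step growth of $k-1$. Quantitatively, Kneser gives growth at least $(j-1)|H|$ per step, where $j$ is the number of $H$-cosets met by $B$; since $k \le j|H|$, this is only guaranteed to be $k - |H|$, which can be as small as $k/2$. Indeed, with $a_k = 22$, $k=4$, $B = \{0,2,11,13\}$ one gets growth $2$ per step and $|R_5| = 12$, clearing the threshold $a_k/2 = 11$ by exactly one element --- the extremal configurations are tight, so no crude accounting will do. Your target claim $\ell(r) \le 2\lfloor a_k/k\rfloor$ is precisely the key lemma of Erd\H{o}s and Graham, and it is exactly where all the difficulty (and the factor $1/k$) lives; you correctly flag this, but flagging is not proving, and neither fallback you mention is carried out (one of them resting on the false ``cannot persist'' premise). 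As it stands, your proposal is a correct reduction of the theorem to that lemma, not a proof of the theorem.
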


Note that \cref{thm:Frobenius} does not require the numbers to be pairwise co-prime.

\begin{corollary}[\cite{erdos1972linear}]
    \label{cor:square-frobenius-bound}
    Let $A \subseteq \N^+$ be bounded by $q \in \N$ and assume $\gcd(A) = 1$.
    Then any number $n \ge 2q^2$ can be written as a conical combination of $A$.
\end{corollary}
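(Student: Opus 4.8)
The plan is to derive the bound directly from \cref{thm:Frobenius} by estimating the Frobenius threshold in terms of $q$ alone. Write $A = \{a_1 < \dots < a_k\}$ with every $a_i \le q$ and $\gcd(a_1, \dots, a_k) = 1$. The hypothesis of \cref{thm:Frobenius} (set-wise coprimality) is precisely the assumption $\gcd(A) = 1$, so once $A$ has at least two elements we may apply it and merely need to confirm that $n \ge 2q^2$ clears the threshold.

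First I would dispose of the degenerate case $k = 1$: here $\gcd(A) = a_1 = 1$ forces $A = \{1\}$, and then every $n \in \N$ is trivially the conical combination $n \cdot 1$. For $k \ge 2$, I would invoke \cref{thm:Frobenius} and bound its threshold $2a_{k-1}\lfloor a_k/k\rfloor - a_k$ crudely. Since $a_{k-1} < a_k \le q$ we have $a_{k-1} \le q$, and since $k \ge 1$ we have $\lfloor a_k/k\rfloor \le a_k \le q$; multiplying gives $2a_{k-1}\lfloor a_k/k\rfloor \le 2q^2$, and subtracting $a_k \ge 1$ yields $2a_{k-1}\lfloor a_k/k\rfloor - a_k \le 2q^2 - 1 < 2q^2 \le n$. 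Thus $n$ strictly exceeds the Frobenius threshold, and \cref{thm:Frobenius} delivers nonnegative integers $x_1, \dots, x_k$ with $n = \sum_i x_i a_i$, which is exactly the claimed conical combination.

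I expect no genuine obstacle here; the statement is a clean specialization of \cref{thm:Frobenius}. The only points needing care are the single-element corner case and the observation that the coarse estimates $a_{k-1} \le q$ and $\lfloor a_k/k\rfloor \le q$ already suffice, so the finer $\lfloor a_k/k\rfloor$ factor in \cref{thm:Frobenius} is not even exploited for this weaker $2q^2$ bound. A sharper analysis would improve the constant, but $2q^2$ is all that the tester requires, so I would keep the argument at this level of simplicity.
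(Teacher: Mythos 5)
Your proof is correct and follows exactly the route the paper intends: the corollary is stated as an immediate specialization of \cref{thm:Frobenius}, obtained by bounding the threshold $2a_{k-1}\lfloor a_k/k\rfloor - a_k \le 2q^2 - 1 < 2q^2$, which is precisely your computation. Your explicit handling of the degenerate case $A=\{1\}$ (where the theorem's threshold expression is undefined) is a small point of care the paper leaves implicit, but it does not constitute a different approach.
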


\begin{corollary}
    \label{cor:all-conical-combinations}
    Let $A \subseteq \N^+$ be bounded by $q \in \N$.
    Then any number $n \ge 2q^3$ such that $\gcd(A) | n$ can be written as a conical combination of $A$.
\end{corollary}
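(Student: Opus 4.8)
The plan is to reduce this statement to the coprime case already handled by \cref{cor:square-frobenius-bound}, by factoring out the common divisor. Set $d = \gcd(A)$, and (assuming, as is implicit, that $A$ is nonempty) observe two things: first, $d$ divides every element of $A$, so $d$ is at most the smallest element of $A$, and since $A$ is bounded by $q$ this gives $d \le q$; second, the rescaled set $A' = \{a/d : a \in A\} \subseteq \N^+$ satisfies $\gcd(A') = 1$ and is still bounded by $q$, because each $a/d \le a \le q$.

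The next step is to transport the hypothesis on $n$ through this rescaling. Since $\gcd(A) = d$ divides $n$, the quotient $m = n/d$ is a natural number. From $n \ge 2q^3$ together with $d \le q$ I would derive
\[
m = \frac{n}{d} \ge \frac{2q^3}{d} \ge \frac{2q^3}{q} = 2q^2,
\]
so $m$ clears the threshold required by \cref{cor:square-frobenius-bound}. This inequality is exactly where the bound degrades from $q^2$ to $q^3$: the cubic slack in $n$ is what guarantees that, after dividing by $d$ (which can be as large as $q$), the resulting target $m$ still exceeds $2q^2$.

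Finally I would apply \cref{cor:square-frobenius-bound} to the set $A'$, which has $\gcd(A') = 1$ and is bounded by $q$, and to the number $m \ge 2q^2$. This yields coefficients $x_a \in \N$ (indexed by $a \in A$) with $m = \sum_{a \in A} x_a (a/d)$. Multiplying through by $d$ gives $n = dm = \sum_{a \in A} x_a a$, which is the desired conical combination of $A$, completing the argument.

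The proof is essentially bookkeeping once the reduction is set up, and I do not expect a genuine obstacle. The only point that requires care is the estimate $d \le q$, which is what makes the division by $d$ safe and pins down why the threshold $2q^3$ (rather than $2q^2$) is the correct one for the non-coprime setting; a sloppy treatment here would either break the bound or smuggle in an unjustified constant.
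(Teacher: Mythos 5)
Your proof is correct and follows exactly the paper's approach: rescale $A$ by $d=\gcd(A)$, apply \cref{cor:square-frobenius-bound} to the coprime set $A'$, and multiply back by $d$. The paper states this reduction in two lines without verifying the details; you additionally spell out the bookkeeping ($d\le q$, $A'$ bounded by $q$, $n/d\ge 2q^2$), which is exactly the justification the paper leaves implicit.
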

\begin{proof}
     Apply \cref{cor:square-frobenius-bound} to $A' \coloneqq \{a / \gcd(A) \mid a \in A\}$ and $n' \coloneqq n / \gcd(A)$.
     Multiply the resulting conical combination by $\gcd(A)$.
\end{proof}

\begin{definition}[$q$-Cover Tester]
A \emph{$q$-cover tester} is a randomized algorithm that receives as input $q,n\in \N$ and $\howfar \in \R^+$ and has oracle access to a string $S\in\Sigma^n$. It returns \yes~if $S$ has a quasiperiod at most $q$ and \no~with probability at least $3/4$ if $S$ is \emph{$\howfar$-far} from having a quasiperiod at most $q$; i.e., the minimum Hamming distance of $S$ and a string $S'$ that has such a cover is at least $\howfar n$.
\end{definition}
A $q$-cover tester does not access the input string $S$ directly; instead, it uses queries to the oracle. A \emph{query} is an integer $i\in [n]$ provided to the oracle, on which the oracle returns the letter $S[i]$.   
The \emph{query complexity} of a tester is the maximum number of queries it uses as a function of the parameters $q$, $n$, and $\howfar$.

\section{An Efficient Tester for Covers and Seeds}\label{sec:tester}

Let us fix a string $S \in \Sigma^n$ and a string $C \in \Sigma^q$, for two integers $0<q<n$.
We wish to establish results that help us test whether $C$ is a cover of $S$.

\begin{observation}
    \label{obs:superstring-cover-size}
    Let $C$ be a seed of $S$. Then there is a string $S'=X\cdot S \cdot Y$, with $|X|\in [0,q]$ and $|Y|\in [0,q]$, such that $C$ is a cover of $S'$.
\end{observation}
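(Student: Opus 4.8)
The plan is to start from the definition of a seed: there is a string $T$ that contains $S$ as a substring and for which $C$ is a cover. Fixing one occurrence of $S$ inside $T$ gives a decomposition $T = X_0 \cdot S \cdot Y_0$; write $s = |X_0|+1$ and $t = |X_0|+|S|$ for the first and last positions of this copy of $S$ in $T$. The prefix $X_0$ and the suffix $Y_0$ may be arbitrarily long, so the entire task is to trim them down to length at most $q$ without destroying the cover property.

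Next I would choose the trimming points. Since $C$ covers $T$, position $s$ lies inside some occurrence of $C$; let $a$ be the starting position of such an occurrence, so that $s - q + 1 \le a \le s$. Symmetrically, position $t$ lies inside some occurrence of $C$; let $e$ be the ending position of such an occurrence, so that $t \le e \le t + q - 1$. I then set $S' \coloneqq T[a \dd e]$. Because $a \le s \le t \le e$, the copy of $S$ survives intact, so $S' = X \cdot S \cdot Y$ with $X = T[a \dd s-1]$ and $Y = T[t+1 \dd e]$; moreover $|X| = s - a \le q - 1$ and $|Y| = e - t \le q - 1$, which yields the required length bounds.

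It remains to verify that $C$ is still a cover of $S' = T[a \dd e]$, which I expect to be the crux. For every position $i \in [a,e]$ I must exhibit an occurrence of $C$ that covers $i$ and lies entirely within $[a,e]$. For positions close to the right end, namely $i \ge e - q + 1$, the chosen right occurrence (starting at $e - q + 1$ and ending at $e$) does the job. For every other position, $i \le e - q$, I take any occurrence of $C$ covering $i$ in $T$: its start is at least $a$ (either because $i$ falls inside the occurrence starting at $a$, or because $i \ge a + q$ forces any covering occurrence to begin beyond $a$), and its end is at most $i + q - 1 \le e - 1$, so it sits inside $[a,e]$. The two ranges together cover all of $[a,e]$.

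The only delicate point is the interaction of the two cuts: one must ensure that the right-boundary occurrence does not reach to the left of $a$, and this is exactly where the seed hypothesis $|C| \le |S|$ enters, since it guarantees $t - q + 1 \ge s \ge a$ and hence that this occurrence lies within $[a,e]$. Everything else is bookkeeping with the covering intervals.
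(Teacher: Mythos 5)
Your proof is correct. The paper states this as an unproven observation, and your argument---trimming the covered superstring $T$ to the interval $[a \dd e]$ spanned by an occurrence of $C$ covering the first position of $S$ and an occurrence covering the last position---is precisely the natural argument the observation implicitly relies on, and you correctly isolate the one place where the seed condition $|C| \le |S|$ is genuinely needed (ensuring the right-boundary occurrence starting at $e-q+1$ does not protrude left of $a$). One minor wording slip: for positions $i \le e - q$ you say you take \emph{any} occurrence of $C$ covering $i$ and assert its start is at least $a$, which is literally false when $i < a + q$ (some covering occurrence may start before $a$); there you must specifically use the occurrence starting at $a$ itself, which is clearly what your parenthetical intends, so the argument stands after this trivial rephrasing.
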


\begin{observation}
    \label{obs:seed-of-substrings}
If $C$ is a cover of $S$, then $C$ is a seed of any substring of $S$ whose length is at least $|C|$.
\end{observation}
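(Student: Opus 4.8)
The plan is to exhibit, for an arbitrary substring $T = S[i\dd j]$ of $S$ with $j - i + 1 \ge q$ (where $q = |C|$), a superstring $U$ of $T$ that $C$ covers; then $C$ qualifies as a seed of $T$ directly from the definition. The natural candidate for $U$ is the fragment of $S$ obtained by extending $T$ leftward and rightward just far enough to be flush with two occurrences of $C$ that witness the covering of the endpoints of $T$.

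Concretely, first I would use the hypothesis that $C$ covers $S$ to pick an occurrence of $C$ covering the leftmost position $i$: this gives a start $i_1$ with $i_1 \le i \le i_1 + q - 1$ and $S[i_1 \dd i_1 + q - 1] = C$. Symmetrically, covering the rightmost position $j$ yields an occurrence ending at some $j_1$ with $j_1 - q + 1 \le j \le j_1$ and $S[j_1 - q + 1 \dd j_1] = C$. Set $U \coloneqq S[i_1 \dd j_1]$. Since $i_1 \le i \le j \le j_1$, the fragment $T$ is a substring of $U$, and $|C| = q \le |T|$, so both conditions in the definition of seed follow once I show that $C$ covers $U$.

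The hard part is verifying that $C$ is a cover of $U$ \emph{as a standalone string}, i.e.\ that every position of $U$ lies in an occurrence of $C$ contained entirely within $U$; the issue is that a covering occurrence inherited from $S$ may protrude past an endpoint of $U$. I would handle this by a short case analysis on a position $p \in [i_1, j_1]$. Since $C$ covers $S$, some occurrence $S[s \dd s + q - 1] = C$ contains $p$. If $i_1 \le s$ and $s + q - 1 \le j_1$, this occurrence lies inside $U$ and we are done. If instead $s < i_1$, then $p \le s + q - 1 < i_1 + q - 1$, so $p$ falls inside the anchoring occurrence $S[i_1 \dd i_1 + q - 1]$, which is contained in $U$; the case $s + q - 1 > j_1$ is symmetric, with $p \ge s > j_1 - q + 1$ placing $p$ inside the anchoring occurrence $S[j_1 - q + 1 \dd j_1]$. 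Hence every position of $U$ is covered by an occurrence within $U$, so $C$ covers $U$, and therefore $C$ is a seed of $T$.
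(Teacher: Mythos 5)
Your proof is correct; the paper states this as an observation with no proof at all, and your argument is the natural formalization: clamp $U$ between an occurrence of $C$ covering position $i$ and one covering position $j$, then check by case analysis that no position of $U$ needs a protruding occurrence. The one step you assert rather than verify is that the two anchoring occurrences themselves lie inside $U$ (i.e., that $i_1+q-1\le j_1$, equivalently $j_1-q+1\ge i_1$); this is precisely where the hypothesis $|T|\ge|C|$ enters a second time, since $j_1 \ge j \ge i+q-1 \ge i_1+q-1$, and without it the claim "the anchoring occurrence is contained in $U$" could fail.
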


\begin{lemma}
    \label{lem:cover-combination}
    Let $S_1=S[i\dd j]$ and $S_2=S[i'\dd j']$ be two fragments of $S$, with $i\leq i'$ and $j\leq j'$, so that they (1) share at least $2q$ positions of $S$ and (2) both have $C$ as a seed.
    Let $S_3 \coloneqq S[i\dd j']$.
    Then $C$ is also a seed of $S_3$.
\end{lemma}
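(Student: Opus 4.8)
The plan is to build an explicit superstring of $S_3$ that is covered by $C$ by gluing together the covered superstrings guaranteed for $S_1$ and $S_2$. First I would invoke \cref{obs:superstring-cover-size} on each of $S_1$ and $S_2$: since $C$ is a seed of both, there are strings $S_1' = X_1 \cdot S_1 \cdot Y_1$ and $S_2' = X_2 \cdot S_2 \cdot Y_2$, with all of $|X_1|,|Y_1|,|X_2|,|Y_2|$ in $[0,q]$, such that $C$ covers $S_1'$ and $C$ covers $S_2'$. The candidate superstring of $S_3 = S[i \dd j']$ is then $S_3' \coloneqq X_1 \cdot S[i \dd j'] \cdot Y_2$, taking the left context from the witness for $S_1$ and the right context from the witness for $S_2$. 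Since $|C|=q \le |S_1| \le |S_3|$, it remains only to prove that $C$ is a cover of $S_3'$.

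To reason about positions uniformly, I would fix $S$-relative coordinates, placing the copy of $S[\ell]$ at index $\ell$, so that in $S_3'$ the block $X_1$ occupies $[i-|X_1|, i-1]$ and $Y_2$ occupies $[j'+1, j'+|Y_2|]$. The key observation is that $S_3'$ and $S_1'$ agree on the entire prefix range $[i-|X_1|, j]$ (both read $X_1 \cdot S[i \dd j]$ there, as $S[i \dd j]$ is a prefix of $S[i \dd j']$), while $S_3'$ and $S_2'$ agree on the entire suffix range $[i', j'+|Y_2|]$. Hence any occurrence of $C$ coming from the cover of $S_1'$ that lies inside $[i-|X_1|, j]$ is a genuine occurrence in $S_3'$, and likewise any occurrence from the cover of $S_2'$ lying inside $[i', j'+|Y_2|]$ is genuine in $S_3'$.

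With this, I would cover every position $p$ of $S_3'$ by a case split. If $p \le j - q$, then an occurrence of $C$ covering $p$ in $S_1'$ ends at index at most $p+q-1 \le j-1$ and starts at index at least $i-|X_1|$, so it lies in $[i-|X_1|, j]$ and transfers to $S_3'$. If $p \ge i'+q-1$, then any occurrence covering $p$ in $S_2'$ starts at index at least $p-q+1 \ge i'$, so it lies in $[i', j'+|Y_2|]$ and transfers to $S_3'$. It remains to check that these two cases exhaust all positions: an uncovered $p$ exists only if $j-q < p < i'+q-1$, i.e.\ only if $j - i' + 1 \le 2q - 2$, which is ruled out precisely by hypothesis (1) that $S_1$ and $S_2$ share at least $2q$ positions, namely $j - i' + 1 \ge 2q$. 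Therefore every position of $S_3'$ is covered, so $C$ is a cover of $S_3'$ and hence a seed of $S_3$.

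The main obstacle — and really the whole reason for the constant $2q$ — is the bookkeeping in this last step: an inherited occurrence is valid only once it lies entirely within the agreement region, so we forfeit a zone of width about $q$ at the right end of $S_1$'s contribution and about $q$ at the left end of $S_2$'s contribution. The shared region must be wide enough to absorb both losses simultaneously, which is exactly why an overlap of $2q$ (with $q=|C|$) suffices and is the natural threshold. I would also note the degenerate possibilities $i=i'$ or $j=j'$ leave the argument unaffected, and that positions $p < i-|X_1|$ or $p > j'+|Y_2|$ fall outside $S_3'$ and need not be considered.
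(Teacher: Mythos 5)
Your proof is correct and takes essentially the same approach as the paper's: the paper likewise glues the two coverings with seed $C$, discarding the occurrences from $S_1$'s covering that start after $j-q$ and those from $S_2$'s covering that start before $i'$, and uses the $2q$-position overlap to conclude that the union covers all of $S_3$. Your write-up merely makes explicit what the paper leaves implicit, namely the superstring $X_1 \cdot S[i \dd j'] \cdot Y_2$ and the agreement-region bookkeeping that justifies transferring occurrences.
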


\begin{figure}[t]
    \centering
    \includegraphics[page=1,width=9cm]{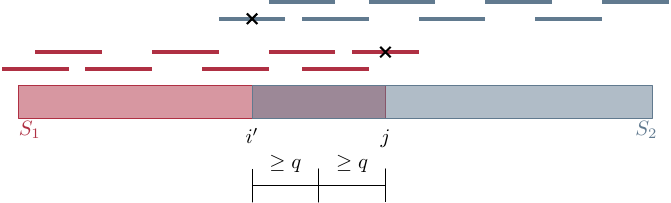}
    \caption{Combining the seed occurrences of two fragments with a long overlap.}
    \label{fig:overlap-combination}
\end{figure}

\begin{proof}
    To construct a covering of $S_3$ with seed $C$, take such coverings for $S_1$ and $S_2$; 
    see \Cref{fig:overlap-combination} for an illustration.
    From the covering of $S_1$, remove the occurrences of $C$ that start after $j-q$. 
    From the covering of $S_2$, remove the occurrences of $C$ that start before $i'$.
    Since $j-i'+1 \ge 2q$, we have that the union of these coverings now covers all of $S_3$. 
    Thus, $C$ is a seed of $S_3$.
\end{proof}

\begin{definition}[Period set]
    We define the \emph{period set} of $C$, denoted by $\selfoverlap(C)$, as the set of periods of $C$. Thus, $\selfoverlap(C)\subseteq [|C|]$.
\end{definition}

We are interested in the ways we can combine copies of $C$ to form a longer string.
The trivial way is concatenating $C$ with itself: the length of $C$ is a period of $C$. The period set tells us which other ways are possible:
for every $\lambda \in \selfoverlap(C)$, we can construct a string of length $\setsize{C} + \lambda$ by overlapping $C$ with itself.

\begin{lemma}
    \label{lem:gcd-string-construction}
    Let $\ell \ge 2q^3+q$ such that $\gcd(\selfoverlap(C))|\ell$. There is a string of length $\ell$ for which $C$ is a cover.
\end{lemma}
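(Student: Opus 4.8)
The plan is to construct the desired string by concatenating copies of $C$ that overlap according to the periods in $\selfoverlap(C)$. Recall that for every $\lambda \in \selfoverlap(C)$, we can glue two copies of $C$ so that the second starts $\lambda$ positions after the first, producing a string of length $|C| + \lambda = q + \lambda$ in which $C$ is a cover. More generally, starting from a single copy of $C$ and repeatedly appending overlapping copies using offsets drawn from $\selfoverlap(C)$, we obtain a string of length $q + \sum_i \lambda_i$, where each $\lambda_i \in \selfoverlap(C)$, and $C$ covers every position of the result: each appended copy extends the covering by the positions it newly introduces, and consecutive occurrences are close enough (at most $q$ apart) that no position is left uncovered.

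The key step is therefore to realize $\ell$ as $q$ plus a conical combination of $\selfoverlap(C)$. Since $C \in \Sigma^q$ has every period bounded by $q$, the set $A \coloneqq \selfoverlap(C)$ is bounded by $q$. By hypothesis $\ell \ge 2q^3 + q$, so $\ell - q \ge 2q^3$. I would like to apply \cref{cor:all-conical-combinations} to the target $\ell - q$ with the set $A$; this requires $\gcd(A) \mid (\ell - q)$. The assumption gives $\gcd(\selfoverlap(C)) \mid \ell$. Since $q = |C| \in \selfoverlap(C)$ (the length of $C$ is trivially a period, as the self-overlap condition is vacuous), we have $\gcd(A) \mid q$, and combining $\gcd(A)\mid \ell$ with $\gcd(A)\mid q$ yields $\gcd(A) \mid (\ell - q)$. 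Hence \cref{cor:all-conical-combinations} applies and writes $\ell - q = \sum_{\lambda \in A} x_\lambda \lambda$ for some $x_\lambda \in \N$.

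Putting these together, I take a multiset of offsets consisting of $x_\lambda$ copies of each $\lambda \in A$, in any order, and append copies of $C$ with these offsets to an initial copy of $C$. The total length is $q + \sum_{\lambda} x_\lambda \lambda = q + (\ell - q) = \ell$, and by the covering argument above $C$ is a cover of the resulting length-$\ell$ string.

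I expect the main subtlety to be the bookkeeping for the covering argument: I must verify that appending a copy of $C$ at an offset $\lambda \in \selfoverlap(C)$ is consistent (the overlapping region agrees because $\lambda$ is a genuine period of $C$) and that it covers exactly $\lambda$ new positions without leaving a gap, so that the union over all appended copies covers all $\ell$ positions. The number-theoretic step reduces cleanly to \cref{cor:all-conical-combinations}; the only point needing care there is the divisibility condition, which follows from the observation that $q \in \selfoverlap(C)$ so that $\gcd(A)$ divides both $\ell$ and $q$.
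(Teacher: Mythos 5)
Your proof is correct and follows essentially the same route as the paper: apply \cref{cor:all-conical-combinations} to $\ell - q$ and realize that sum by repeatedly appending copies of $C$ overlapped according to periods in $\selfoverlap(C)$. In fact, you make explicit a step the paper leaves implicit, namely that $q \in \selfoverlap(C)$ ensures $\gcd(\selfoverlap(C)) \mid (\ell - q)$ so that the corollary's divisibility hypothesis is met.
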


\begin{proof}
    We construct such a string, denoted by $S'$, by overlapping $C$ with itself.

    Let $\selfoverlap(C) = \{a_1, \dots, a_k\}$ and let $x_1, \dots, x_k \in \N$ be such that $\sum_{i=1}^k x_i a_i = \ell - q$.
    These coefficients exist by \Cref{cor:all-conical-combinations}.
    Start with $S':=C$.
    Then for each $i \in [k]$, add a copy of $C$, overlapping by $(q-a_i)$ positions.
    This is possible by the definition of \(\selfoverlap(C)\).
    This extends the string by $a_i$ letters.
    Repeat this $x_i$ times.
    At the end of this process, $S'$ is coverable by $C$ and has the correct length.
\end{proof}

The following lemma proves that when $C$ is a cover of $S$, the period set $\selfoverlap(C)$ determines all the possible positions at which $C$ can occur in $S$.

\begin{lemma}
    \label{lem:gcd-of-positions}
    Let $C$ be a cover of $S$ and $P := \{p_1 < \dots < p_k\}$ the positions at which $C$ occurs in $S$.  
    Let $P' \coloneqq \{p_i - 1 \mid i \in [k]\}$.
    Then $\gcd(\selfoverlap(C)) | \gcd(P')$.
\end{lemma}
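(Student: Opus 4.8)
The plan is to reduce the divisibility statement to a claim about consecutive occurrences. Since a cover is always a border of $S$, the string $C$ is in particular a prefix of $S$, so $p_1 = 1$ and hence $p_1 - 1 = 0$. As any common divisor of a set divides its $\gcd$, and since $0$ contributes nothing to a $\gcd$, it suffices to show that $d \coloneqq \gcd(\selfoverlap(C))$ divides $p_i - 1$ for every $i$. Writing this difference as a telescoping sum $p_i - 1 = (p_1 - 1) + \sum_{j=1}^{i-1}(p_{j+1} - p_j) = \sum_{j=1}^{i-1}(p_{j+1} - p_j)$, the whole lemma reduces to proving that $d$ divides each consecutive gap $p_{j+1} - p_j$.

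To handle a single gap, I would first bound it by $q$. Fix $i < k$ and let $s \coloneqq p_{i+1} - p_i$. The occurrence of $C$ at $p_i$ covers exactly the positions $\{p_i, \dots, p_i + q - 1\}$, and because $i < k$ we have $p_i + q \le n$. Since $C$ is a cover, position $p_i + q$ must lie inside some occurrence of $C$, i.e., some occurrence starts in $\{p_i + 1, \dots, p_i + q\}$. As $p_{i+1}$ is the least occurrence position exceeding $p_i$, this forces $p_{i+1} \le p_i + q$, that is $1 \le s \le q$.

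Next I would argue that this gap is a period of $C$, so that $s \in \selfoverlap(C)$ and therefore $d \mid s$. If $s = q$, then $s = \setsize{C}$ is a period of $C$ by the (vacuous) definition of period, so $s \in \selfoverlap(C)$. If $s < q$, the occurrences at $p_i$ and at $p_{i+1} = p_i + s$ overlap in the $q - s$ positions $\{p_i + s, \dots, p_i + q - 1\}$. Reading this overlapping block from the first occurrence gives $C[s+1 \dd q]$, and reading it from the second occurrence gives $C[1 \dd q - s]$; equating them yields $C[t] = C[t+s]$ for all $1 \le t \le q - s$, which is precisely the statement that $s$ is a period of $C$, so again $s \in \selfoverlap(C)$. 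In both cases $d \mid s$, and combining this with the telescoping sum gives $d \mid (p_i - 1)$ for every $i$, hence $d \mid \gcd(P')$.

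The main obstacle is the middle step: I expect the gap-bound argument to require a little care at the boundary (ensuring $p_i + q \le n$, which is why the last occurrence is excluded, and noting that $C$ being a suffix pins $p_k = n - q + 1$). The combinatorial heart is the overlap-to-period implication, but this is exactly the standard fact that two occurrences of a string that overlap induce a period equal to their offset, so I anticipate no real difficulty there beyond writing the index bookkeeping cleanly.
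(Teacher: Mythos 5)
Your proof is correct, and it takes a genuinely different --- and in fact cleaner --- route than the paper's. The paper fixes an arbitrary covering of $S$ and, for each $p_i$, considers the covering occurrence that ends first at or after $p_i' = p_i - 1$, say at position $e_i$; since $C$ covers the prefix $S[1\dd e_i]$, the length $e_i$ is a sum of elements of $\selfoverlap(C)$, and a final overlap argument between that covering occurrence and the occurrence at $p_i$ transfers divisibility from $e_i$ to $p_i'$. You instead anchor at $p_1 = 1$ (a cover is a border) and telescope over \emph{all} consecutive occurrences, showing that each gap $p_{j+1} - p_j$ is at most $q$ (by applying the cover property at position $p_j + q$, which exists because $p_j < p_k \le n - q + 1$) and hence is itself a period of $C$. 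Both arguments rest on the same combinatorial core --- two overlapping or abutting occurrences induce a period equal to their offset --- but your decomposition eliminates the paper's bridging step, and that is a real gain: the paper's intermediate claim that $e_i - p_i' \in \selfoverlap(C)$ is not literally justified (the period induced by that overlap is the offset $q - (e_i - p_i')$, and the complement of a period need not be a period), so only the divisibility $\gcd(\selfoverlap(C)) \mid e_i - p_i'$ survives, using $\gcd(\selfoverlap(C)) \mid q$; similarly, $p_i'$ ends up as a \emph{difference} of conical combinations rather than a conical combination. Your version works with divisibility of the gaps throughout, every quantity you handle lies in $\selfoverlap(C)$ outright, and no such repair is needed. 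The modest price is two easy observations the paper's route does not use: the gap bound $p_{j+1} - p_j \le q$ and the border facts ($p_1 = 1$ as the anchor of the telescoping, and $p_k \le n - q + 1$ for the gap bound), both of which you establish correctly.
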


\begin{proof}
    Let $p_i' \in P'$. 
    We will show that $\gcd(\selfoverlap(C)) | p'_i$, which implies the claim.

    Fix an arbitrary covering of $S$ using $C$.
    Let $e_i$ be the ending position of the occurrence of $C$ in this covering that ends first on or after position $p'_i$.
    Then $C$ is a cover of the string $S[1\dd e_i]$, and therefore, by the definition of $\selfoverlap(C)$, there must be $\{x_a \in \N\}_{a \in \selfoverlap(C)}$ such that $e_i = \sum_{a \in \selfoverlap(C)} x_a a$.
    Since there is an occurrence of $C$ in $S$ at positions $e_i$ and $p_i$, 
    and $e_i - p_i \le q$ (by choice of $e_i$), we have $e_i - p_i - 1 \in \selfoverlap(C)$ and thus $e_i - p_i' \in \selfoverlap(C)$. 
    Therefore, $p_i'$ can be written as a conical combination of $\selfoverlap(C)$.
    This implies that $\gcd(\selfoverlap(C)) | p_i'$.
\end{proof}

\begin{definition}[$C$-Consistent Fragment]\label{def:consistent}
    A fragment $S[i \dd j]$ of $S$ is \emph{$C$-consistent} if and only if (1) $C$ is a seed of $S[i\dd j]$ and (2) if $p\in[i,j]$ is the position of an occurrence of $C$ in $S[i\dd j]$, then $\gcd(\selfoverlap(C)) | p-1$.
\end{definition}

Let $\mathcal S = \{S_i\}_{i \in [n / (2q^3)]}$ be the set of fragments of $S$ obtained by splitting $S$ into fragments of length $4q^3$ (the last fragment may be shorter than $4q^3$ or empty), each overlapping by $2q^3$ positions.
Thus, we have $|\mathcal S|=\cO(n/q^3)$. 

An immediate consequence of \cref{def:consistent} is that if a fragment of $S$ is not $C$-consistent, then $C$ is not a cover of $S$. \cref{lem:size} shows that if $S$ is $\howfar$-far from $\qpSet(q)$, then this can be detected with high probability by picking a random fragment from set $\mathcal S$, which, as we explain next, is what our algorithm does.

\begin{lemma}\label{lem:size}
    If $S\in\Sigma^n$ is $\howfar$-far from $\qpSet(q)$, for some fixed $C\in\Sigma^q$, the set $\mathcal S' \subseteq \mathcal S$ of $C$-consistent fragments of $S$ has size at most $(1-\howfar)\cdot n / (2q^3)$.
\end{lemma}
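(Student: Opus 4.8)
The plan is to prove the contrapositive: assuming $|\mathcal S'| > (1-\howfar)\cdot n/(2q^3)$, I will construct a string $S'$ that has $C$ as a cover (hence a quasiperiod at most $q$) and satisfies $d_H(S,S') < \howfar n$, so that $S$ is not $\howfar$-far from $\qpSet(q)$. Since $|\mathcal S| = n/(2q^3)$, the number of non-$C$-consistent (``bad'') fragments is then strictly less than $\howfar n/(2q^3)$. I will charge all edits to these bad fragments with a budget of $2q^3$ each: as consecutive fragments overlap by exactly $2q^3$, I assign to every fragment $S_i$ the block $B_i$ of its first $2q^3$ positions, so that the $B_i$ partition $[n]$, and I modify $S$ only inside blocks owned by bad fragments. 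This keeps $d_H(S,S') \le 2q^3 \cdot \#\{\text{bad fragments}\} < \howfar n$.

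First I would use consistency to pin all retained occurrences of $C$ to a single residue class. Writing $g \coloneqq \gcd(\selfoverlap(C))$, \cref{def:consistent} guarantees that every occurrence of $C$ inside a $C$-consistent fragment starts at a position $p$ with $g \mid p-1$, and \cref{lem:gcd-of-positions} confirms that this is precisely the alignment forced when $C$ genuinely covers a string. Within a maximal run of consecutive $C$-consistent fragments, the pairwise overlaps of length $2q^3 \ge 2q$ let me glue their seed-coverings by repeated application of \cref{lem:cover-combination}, so $C$ is a seed of the whole run; by \cref{obs:superstring-cover-size} it then covers the run's interior except for at most $q$ positions at each end, all of whose occurrences lie in the class $1 \pmod g$.

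The heart of the construction is bridging the gaps. Between two kept runs (or before the first and after the last) lies a region of blocks owned by bad fragments; because all retained occurrences live in the class $1 \pmod g$, the length to be spanned between the last kept occurrence on the left and the first on the right is a multiple of $g$. Whenever this length is at least $2q^3+q$, \cref{lem:gcd-string-construction} (itself resting on \cref{cor:all-conical-combinations}) supplies a $C$-covered string of exactly that length, which I splice into the bad blocks, overlapping it by $q$ with the adjacent kept material so that the seam is covered and the run-boundary seed-coverings become genuine coverings. Carrying this out for every gap and both ends produces an $S'$ for which $C$ is a cover.

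The main obstacle is the bookkeeping at the seams, and it is exactly what $C$-consistency and \cref{lem:gcd-of-positions} are designed to neutralize: the $1 \pmod g$ condition is what forces every gap length to be divisible by $g$, which is the precise hypothesis \cref{lem:gcd-string-construction} needs to manufacture a $C$-covered bridge of the required length. The only quantitative subtlety is that each gap be long enough ($\ge 2q^3+q$) to invoke that lemma; since a lone bad fragment owns only $2q^3$ positions, I would let a bridge borrow the at-most-$q$ end positions of an adjacent kept run (which are uncovered there anyway), so the edit count stays within the $2q^3$-per-bad-fragment budget and hence strictly below $\howfar n$.
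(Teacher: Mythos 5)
Your proposal follows the same route as the paper's proof: pass to the contrapositive, group the inconsistent fragments into maximal runs flanked by $C$-consistent fragments, glue the coverings of the consistent runs via \cref{lem:cover-combination}, and replace the material between two consecutive runs by a $C$-coverable string from \cref{lem:gcd-string-construction}, with condition (2) of \cref{def:consistent} supplying the divisibility hypothesis. You are in fact more explicit than the paper on two points it leaves implicit: the block-partition charging that converts ``few bad fragments'' into ``Hamming distance less than $\howfar n$'', and the check that each gap length is a multiple of $g=\gcd(\selfoverlap(C))$ (for which you also need $g \mid q$; this holds because $q \in \selfoverlap(C)$).

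The genuine gap is your patch for short gaps, and it is the one step where the argument needs a further idea. If a bad run consists of a single fragment $S_i$, its two neighbours $S_{i-1}$ and $S_{i+1}$ are adjacent and together contain all of $S_i$; hence the uncovered region between the retained covering of the left run (which reaches to within $q$ of the end of $S_{i-1}$) and that of the right run (which starts within $q$ of the start of $S_{i+1}$) has length at most roughly $2q$ --- not, as your budget heuristic suggests, roughly $2q^3$. Borrowing the at-most-$q$ uncovered end positions changes nothing, since they are already part of this gap, so you remain far below the $2q^3+q$ threshold of \cref{lem:gcd-string-construction}; moreover, a short gap cannot in general be repaired in place: for $C$ with $\selfoverlap(C)=\{q-1,q\}$ (e.g.\ $C=\texttt{abbba}$, $q=5$), a gap of length $1$ forces a bridging chain of occurrences whose total shift is $q+1$, which is not a conical combination of $\selfoverlap(C)$. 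The repair is different: view the bridge as a chain of occurrences of $C$ whose consecutive shifts lie in $\selfoverlap(C)$, and when the distance $D$ between the last kept occurrence on the left and the first kept occurrence on the right is below $2q^3$, delete further occurrences from the left run's covering until $D \ge 2q^3$; then $g \mid D$ and \cref{cor:all-conical-combinations} make $D$ a conical combination, and because you stop as soon as the threshold is crossed, the number of edited positions stays below $2q^3$, i.e.\ within the lone fragment's budget (a run of two bad fragments is marginal in the same way and is handled by the same device). To be fair, the paper's own write-up is silent on exactly this case --- it invokes \cref{lem:gcd-string-construction} without verifying its length hypothesis --- so your proposal has surfaced a real issue; it just does not yet close it.
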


\begin{proof}
    Let $\mathcal T$ be the subset of $\mathcal S$ that includes all fragments that are not $C$-consistent. 
    Assume for the sake of contradiction that $|\mathcal{T}| \leq \howfar n/(2q^3)$.
    We will show that this implies that $S$ is at Hamming distance less than $\howfar n$ from $\qpSet(q)$, that is, that $S$ is not $\howfar$-far from $\qpSet(q)$.
    In particular, we will show that we can edit $S$ only in fragments that are in the set $\mathcal T$ so that $C$ becomes a cover of $S$.
    Note that there are at most $\howfar n$ positions in $\mathcal T$ by the assumption on its size.

    Consider any maximal sequence $S_1, \dots, S_k$ of consecutive fragments in $\mathcal T$. 
    Note that $k$ could be equal to one.
    Let $S_\pre$ be the fragment in $\mathcal S$ before $S_1$ and $S_\post$ the one after $S_k$.
    As we assumed $S_1, \dots, S_k$ to be maximal, we can assume $S_\pre, S_\post \in \mathcal S'$ and thus also that $C$ is a seed of $S_\pre$ and $S_\post$.
    Fix coverings of $S_\pre$ and $S_\post$ with seed $C$ (see \Cref{fig:fixing}).
    For these coverings, let $i_\pre$ be the first position in $S$ after the covering of $S_\pre$ and $i_\post$ the last position before the covering of $S_\post$. 
    By \Cref{obs:superstring-cover-size}, these are at most $q$ positions before or after the end or start of $S_\pre$ and $S_\post$, respectively.
    The fragment $S[i_\pre\dd i_\post]$ can be edited to be coverable by $C$ by replacing it with the string given by \Cref{lem:gcd-string-construction}.

    \begin{figure}[tbp]
    \centering
    \includegraphics[page=2,width=9cm]{quasiperiodic-sketches-fat.pdf}
    \caption{Editing a string that is not $\howfar$-far from $\qpSet(q)$ to be in $\qpSet(q)$.}
    \label{fig:fixing}
\end{figure}
    
    Repeating this for any maximal sequence $S_1, \dots, S_k$ of consecutive fragments in $\mathcal{T}$, yields a string that is coverable by $C$.
    The coverings of the individual fragments combine to yield a covering of the entire string $S$ by \Cref{lem:cover-combination}.
\end{proof}

\begin{theorem}
    \Cref{alg:main} is a tester deciding whether a string $S\in\Sigma^n$ has a cover of length at most $q$ using $\cO(q^3 \log q \cdot \howfar^{-1})$ queries, for some $\howfar \in \R^+$. 
\end{theorem}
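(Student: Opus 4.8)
The plan is to show that \cref{alg:main} has one-sided error: it accepts with probability $1$ whenever $S$ has a cover of length at most $q$, and it rejects with probability at least $3/4$ whenever $S$ is $\howfar$-far from $\qpSet(q)$. Since every cover is a border, and hence a prefix, the only candidate cover of length $\ell$ is $C_\ell \coloneqq S[1\dd\ell]$. I would therefore have the algorithm first query the prefix $S[1\dd q]$ (at cost $\cO(q)$) to obtain all candidates $C_1,\dots,C_q$, and then sample a single batch of $t \coloneqq \cO(\howfar^{-1}\log q)$ fragments uniformly and independently at random from $\mathcal S$, querying each of them in full (at cost $\cO(q^3)$ per fragment, hence $\cO(q^3\howfar^{-1}\log q)$ in total, which dominates the prefix cost). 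The point that makes the query bound work is that the \emph{same} batch of fully read fragments is reused to test every candidate length $\ell\in[q]$: for each $\ell$ the algorithm checks whether all sampled fragments are $C_\ell$-consistent, and it returns \yes iff some $\ell$ passes this check. Checking $C_\ell$-consistency of an already-read fragment uses no further queries, as both the seed test (via \cref{thm:fast-seeds}) and the gcd condition on the occurrence positions are computable from the fragment's content together with its absolute location in $S$.

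For completeness, I would argue as follows. Suppose $S$ has a cover $C$ of length $\ell\le q$; then $C=C_\ell$. By \cref{obs:seed-of-substrings}, $C$ is a seed of every fragment (each has length at least $4q^3\ge\ell$), so condition~(1) of $C$-consistency holds. By \cref{lem:gcd-of-positions}, every occurrence position $p$ of $C$ in $S$ satisfies $\gcd(\selfoverlap(C))\mid p-1$, so condition~(2) also holds for every fragment. Hence every fragment of $\mathcal S$ is $C_\ell$-consistent, the candidate $\ell$ passes regardless of which fragments happen to be sampled, and the algorithm returns \yes with probability $1$.

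For soundness, suppose $S$ is $\howfar$-far from $\qpSet(q)$ and fix any $\ell\in[q]$. Applying \cref{lem:size} to $C_\ell$ (its statement is phrased for $|C|=q$, but the fragment decomposition and every supporting lemma use only $|C|\le q$, so it extends verbatim to shorter candidates), at most a $(1-\howfar)$ fraction of the fragments in $\mathcal S$ are $C_\ell$-consistent. Thus a single uniformly random fragment is $C_\ell$-consistent with probability at most $1-\howfar$, and, since the $t$ samples are independent, the probability that all of them are $C_\ell$-consistent is at most $(1-\howfar)^t\le e^{-\howfar t}$. A union bound over the $q$ candidate lengths then bounds the probability that \emph{some} $\ell$ passes, i.e.\ that the algorithm wrongly returns \yes, by $q\,e^{-\howfar t}$, which is at most $1/4$ once $t\ge\howfar^{-1}\ln(4q)=\cO(\howfar^{-1}\log q)$. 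Hence the algorithm returns \no with probability at least $3/4$, as required.

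The substantive combinatorial work is entirely carried by the preceding lemmas, so the theorem mainly assembles them; the thing to get right is the interplay between the two error-control mechanisms. The reason the bound carries a $\log q$ factor rather than a bare $\howfar^{-1}$ is precisely the union bound over the $q$ possible cover lengths, which forces each candidate to be rejected with probability $1-\cO(1/q)$ and hence demands $t=\Omega(\howfar^{-1}\log q)$ samples. The corresponding subtlety to guard against is that this single batch of samples must be shared across all $q$ candidates: testing each length with a fresh batch of fully read fragments would inflate the query complexity by a factor of $q$, and avoiding that is exactly why the fragments are read once and then reused for every $C_\ell$.
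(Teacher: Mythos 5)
Your proposal is correct and follows essentially the same route as the paper: completeness via \cref{obs:seed-of-substrings} and \cref{lem:gcd-of-positions}, soundness via \cref{lem:size} plus a union bound over the at most $q$ prefix candidates, with one shared batch of $\cO(\howfar^{-1}\log q)$ fully-read fragments. The only differences are cosmetic --- you test all prefixes rather than restricting to borders (which is harmless in both directions), and you spell out the probability arithmetic $(1-\howfar)^t \le e^{-\howfar t}$ and the threshold $t \ge \howfar^{-1}\ln(4q)$ that the paper leaves implicit in its choice of $(24\log q)/\howfar$ samples.
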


\begin{proof}
    The query complexity is immediate from the algorithm's definition.

    For the correctness, first observe that if $S$ has a cover $C$ of length at most $q$, then any set of fragments of $S$ is $C$-consistent (\Cref{def:consistent}).
    The first condition follows from \Cref{obs:seed-of-substrings} and the second one from \Cref{lem:gcd-of-positions}.

    \begin{algorithm}[b]
    \DontPrintSemicolon 
    \textbf{Input: } $q \in \N, n \in \N, \howfar \in \R^+$ and oracle access to a string $S\in\Sigma^n$  
    
    \textbf{Output: } $b \in \{\no, \yes\}$
    
    Let set $\mathcal S := \{S_i\}_{i \in [n / (2q^3)]}$ be defined as above.

    Sample $(24 \log q) / \varepsilon$ of the length-$(4q^3)$ fragments in $\mathcal S$ uniformly at random
and also the length-$(4q^3)$ suffix of $S$. Denote the sampled fragments by set $\mathcal R$.

    Query the $\cO(q^3 \log q \cdot \howfar^{-1})$ positions of the fragments in $\mathcal R$ using the oracle.

    Query the $\cO(q)$ positions $1, \dots, q$ and $n-q+1, \dots, n$ using the oracle.

    For every border $C$ of $S$, $|C|\leq q$, check whether every $F\in \mathcal R$ is $C$-consistent.
    
    If there is such a border $C$, output $\yes$; otherwise output $\no$.

    \caption{A $q$-cover tester\label{alg:main}}
\end{algorithm}

    For the other direction of the correctness, assume that $S$ is $\howfar$-far from $\qpSet(q)$.
    There are $q$ prefixes of $S$ that could be a cover of length at most $q$.
    For any such prefix $C$, the set of fragments in $\mathcal S$ that are $C$-consistent has size at most $(1-\varepsilon) \cdot n / (2|C|^3)$ by \cref{lem:size}.
    Since we sample $(24 \log q) / \varepsilon$ of these, we will, with probability at least $3/4$, reject all possible covers.
    This follows from a simple union bound over the at most $q$ candidate borders, yielding the desired result.
\end{proof}

\Cref{alg:main} can be adapted to check if $S$ has a seed by considering the $\cO(q^2)$ fragments of $S[1 \dd 2q]$ as candidate seeds. Note that, since $2q=\cO(q)$, this does not affect the query complexity of the tester. Additionally, the requirement that the first and last fragments in $\mathcal S$ must be covered should be slightly relaxed. 

\begin{corollary}
    There is a tester deciding whether a string $S\in\Sigma^n$ has a seed of length at most $q$ using $\cO(q^3 \log q \cdot \howfar^{-1})$ queries, for some $\howfar \in \R^+$.
\end{corollary}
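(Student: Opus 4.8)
The plan is to reuse \Cref{alg:main} essentially verbatim, changing only the family of candidate patterns and the treatment of the two extreme fragments of $\mathcal S$, as indicated in the remark preceding the statement. The combinatorial core, \Cref{lem:cover-combination} and \Cref{lem:gcd-string-construction}, is already phrased in terms of seeds and transfers unchanged; the work lies in fixing the candidate set, re-pinning the phase in \Cref{def:consistent}, and redoing the soundness argument of \Cref{lem:size} at the boundary.

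First I would fix the candidates. A cover must be a border, so for \Cref{alg:main} the only candidates are the $q$ prefixes of $S$; a seed need not be a border, so instead I claim that every seed $C$ of $S$ with $|C| \le q$ occurs as a fragment of $S[1 \dd 2q]$ (assuming $n \ge 2q$, else one queries all of $S$). Indeed, by \Cref{obs:superstring-cover-size} such a $C$ covers some $S' = X \cdot S \cdot Y$ with $|X| \le q$, and in any covering of $S'$ the first occurrence of $C$ starts at position $1$ while consecutive occurrences start at most $|C| \le q$ apart; hence the first occurrence whose $S'$-start exceeds $|X|$ starts within $(|X|, |X| + q]$, i.e.\ at an $S$-position in $[1, q]$, and therefore lies entirely inside $S[1 \dd 2q]$. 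The $\cO(q^2)$ fragments of $S[1 \dd 2q]$ of length at most $q$ thus form a complete candidate set, and querying $S[1 \dd 2q]$ costs only $\cO(q)$ extra queries.

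Next I would adapt \Cref{def:consistent}. The point is that a seed need not be aligned to position $1$: writing $g \coloneqq \gcd(\selfoverlap(C))$ and applying \Cref{lem:gcd-of-positions} to the covering of $S' = X S Y$ shows that every occurrence position $p$ of a true seed $C$ in $S$ satisfies $p \equiv 1 - |X| \pmod g$, a fixed but not necessarily unit residue. I would therefore take the candidate to be a concrete fragment $C = S[s \dd s + |C| - 1]$ and replace condition (2) of \Cref{def:consistent} by the relative version $g \mid (p - s)$. Since $s$ is itself an occurrence position, this coincides with the old condition when $s = 1$ and, crucially, makes all consistent fragments share one global phase $s \bmod g$. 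Completeness is then immediate: if $C$ is a seed of $S$, then by \Cref{obs:seed-of-substrings} it is a seed of every fragment of length $\ge |C|$, and all its occurrences lie at phase $s \bmod g$, so every fragment is $C$-consistent and the tester answers \yes.

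Finally, I would reprove the analogue of \Cref{lem:size}, and this is where the main obstacle sits: the boundary. For an interior maximal run of inconsistent fragments I would anchor on the surrounding consistent fragments $S_\pre, S_\post$; their occurrences share the phase $s \bmod g$, so the gap length between them is divisible by $g$ and can be refilled by the string of \Cref{lem:gcd-string-construction} and glued in by \Cref{lem:cover-combination}, exactly as before. For a run touching the first or last fragment there is no outer anchor, but this is precisely where the seed relaxation helps: since $C$ need only cover a \emph{superstring} $X \cdot S \cdot Y$, I am free to extend the covering past that end of $S$ (choosing $|X|, |Y| \le q$ to absorb the phase) rather than forcing a border, so the extreme fragments impose no extra constraint and the border checks on positions $1, \dots, q$ and $n - q + 1, \dots, n$ of \Cref{alg:main} are dropped. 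The sampling is unchanged — $\cO(\log q / \howfar)$ fragments of length $4q^3$, giving $\cO(q^3 \log q \cdot \howfar^{-1})$ queries — and since the union bound now ranges over $\cO(q^2)$ candidates, the required sample size grows only by a constant factor because $\log(q^2) = \cO(\log q)$, yielding the stated bound.
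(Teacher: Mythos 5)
Your proposal takes exactly the same route as the paper's own (very brief) justification: replace the border candidates by the $\cO(q^2)$ fragments of $S[1 \dd 2q]$, relax the coverage requirement on the first and last fragments of $\mathcal S$, and note that the union bound over $\cO(q^2)$ candidates leaves the query complexity unchanged since $\log(q^2) = \cO(\log q)$. Your additional details --- the argument that every seed of length at most $q$ must occur inside $S[1 \dd 2q]$, and the re-pinned phase condition $\gcd(\selfoverlap(C)) \mid (p - s)$ replacing condition (2) of \Cref{def:consistent} --- are correct elaborations of steps the paper leaves implicit.
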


\section{A Simple Streaming Algorithm for Covers via Seeds}\label{sec:stream}

Gawrychowski et al.~showed
a one-pass streaming algorithm for computing the shortest cover of a string of length $n$ that uses $\cO(\sqrt{n \log n})$ space and runs in $\cO(n\log^2 n)$ time~\cite{DBLP:conf/cpm/GawrychowskiRS19}. 
One of its routines is a streaming algorithm for computing the length of the shortest cover if it is at most $q$ that uses $\cO(q)$ space and runs in $\cO(n)$ time. 
The algorithm underlying \cref{thm:stream} is a fundamentally different and \emph{simple} alternative for the same computation that also uses $\cO(q)$ space and runs in $\cO(n)$ time. It follows from our results in \cref{sec:tester} and can be seen as a different, independently useful consequence of our combinatorial insights. 

\begin{theorem}\label{thm:stream}
There is a one-pass streaming algorithm for computing the shortest cover $C$ of $S\in\Sigma^n$, if $|C|\leq q$, that uses $\cO(q)$ space and runs in $\cO(n)$ time. 
\end{theorem}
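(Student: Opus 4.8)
The plan is to reduce the computation of the shortest short cover to a sequence of local seed computations on overlapping blocks of length $\cO(q)$, which are then combined using \Cref{lem:cover-combination}. The starting point is the characterization that, for a border $C$ of $S$, the string $C$ is a cover of $S$ if and only if $C$ is a seed of every block in a suitable overlapping decomposition of $S$. First I would fix the decomposition of $S$ into blocks of length $4q$ in which consecutive blocks overlap by $2q$ positions, together with the length-$4q$ suffix of $S$ as a final block; every block then is a substring of $S$ of length $4q \ge |C|$, and consecutive blocks share at least $2q$ positions. The forward direction is immediate: if $C$ is a cover of $S$, then by \Cref{obs:seed-of-substrings} it is a seed of every block. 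For the backward direction, suppose $C$ is a border of $S$ and a seed of every block. Repeatedly applying \Cref{lem:cover-combination} to $B_1, B_2$, then to their union with $B_3$, and so on (each consecutive overlap being at least $2q$), shows that $C$ is a seed of all of $S$. It then remains to argue that a border that is also a seed is in fact a cover: by \Cref{obs:superstring-cover-size} there is a covering of some $X\cdot S\cdot Y$ with $|X|,|Y|\le q$ by $C$, and any position of $S$ whose covering occurrence sticks out into $X$ (resp.\ $Y$) lies among the first (resp.\ last) $|C|-1$ positions of $S$ and is therefore already covered by the prefix (resp.\ suffix) occurrence of the border $C$. Hence every position of $S$ is covered, so $C$ is a cover.

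Given this characterization, the streaming algorithm is straightforward. While reading $S$, I would store its length-$q$ prefix $S[1\dd q]$ and run the Knuth--Morris--Pratt automaton of $S[1\dd q]$ over the stream, so that at the end the failure chain at position $n$ yields exactly the borders of $S$ of length at most $q$; all of this uses $\cO(q)$ space. Maintaining a sliding window of the $2q$ most recently read letters lets me assemble each block of length $4q$ as soon as it has been fully read. For each completed block $B_i$ I invoke \Cref{thm:fast-seeds} to obtain, in $\cO(q)$ time, an $\cO(q)$-size representation of all seeds of $B_i$, and from it extract the set $\mathrm{Seed}_i\subseteq[q]$ of lengths $c$ for which the prefix $S[1\dd c]$ is a seed of $B_i$; this uses the stored prefix to locate the occurrences of $S[1\dd c]$ inside $B_i$ and cross-references them with the seed representation. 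I keep a Boolean array $\mathrm{alive}[1\dd q]$, initialised to true, and update $\mathrm{alive}[c]\gets\mathrm{alive}[c]\ \wedge\ (c\in\mathrm{Seed}_i)$ after each block. When the stream ends, I intersect the surviving candidates with the borders of length at most $q$ read off the failure chain, and output the smallest such $c$ (or report that no cover of length at most $q$ exists).

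For correctness and complexity: by the characterization above, $\mathrm{alive}[c]$ remains true for exactly those $c$ such that $S[1\dd c]$ is a seed of every block, and intersecting with the borders yields precisely the cover lengths at most $q$, whose minimum is the shortest cover. Each block contributes $\cO(q)$ time and the blocks number $\cO(n/q)$, so the total running time is $\cO(n)$, while the prefix, the window, the failure function, the alive array, and the per-block seed structure all fit in $\cO(q)$ space. The hard part will be the per-block step of turning the generic seed representation of \Cref{thm:fast-seeds} into the set $\mathrm{Seed}_i$ of prefix-lengths in $\cO(q)$ time: one must match the fixed prefix $S[1\dd q]$ against $B_i$ to find where each $S[1\dd c]$ occurs and combine this with the seed representation without spending $\cO(q)$ time per candidate length $c$. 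A secondary, routine point is the careful handling of the first and last blocks, where the border occurrences at positions $1$ and $n-|C|+1$ must be accounted for so that the boundary of $S$ is genuinely covered rather than merely seeded.
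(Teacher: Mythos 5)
Your approach is the same as the paper's: the same decomposition into length-$4q$ blocks overlapping by $2q$ positions, the same characterization that a border $C$ with $|C|\leq q$ is a cover of $S$ if and only if $C$ is a seed of every block (forward by \Cref{obs:seed-of-substrings}, backward by iterating \Cref{lem:cover-combination}), and the same plan of invoking \Cref{thm:fast-seeds} once per block. Your explicit argument that a border which is a seed of all of $S$ is in fact a cover (via \Cref{obs:superstring-cover-size}) fills in a step the paper leaves implicit, and detecting the borders online with the KMP automaton of $S[1\dd q]$ is an acceptable alternative to the paper's computation, at the end of the stream, of the borders of $P\$L$ with $L:=S[n-q+1\dd n]$. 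The problem is that the step you defer as ``the hard part'' is precisely where the proof needs an idea, and you do not supply one: given the $\cO(q)$-size seed representation of a block $F$, you must determine, in $\cO(q)$ time \emph{total}, which of the $q$ candidate prefixes $S[1\dd c]$ are seeds of $F$. The paper resolves this by building the generalized suffix tree of $P=S[1\dd q]$ and $F$ in $\cO(q)$ time (Farach's algorithm), which simultaneously identifies which prefixes of $P$ occur in $F$ and anchors each occurring candidate in the seed representation so that the membership test costs $\cO(1)$ per candidate. Without such a device, the natural test costs $\cO(q)$ per candidate, i.e., $\cO(q^2)$ per block and $\cO(nq)$ overall, so your claimed $\cO(n)$ bound does not follow.

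There is a second gap you do not mention: the $\cO(q)$-time-per-block invocation of \Cref{thm:fast-seeds} (and likewise Farach's construction) is only valid for an integer alphabet polynomial in the length of the string being processed, i.e., polynomial in $q$, whereas the input alphabet $[\sigma]$ need not be small relative to $q$. The paper handles this by inserting the letters of $P$ into a dynamic dictionary supporting $\cO(1)$ worst-case look-ups and renaming every streamed letter into the range $[q]$; a streamed letter not occurring in $P$ cannot occur in any string covered by a prefix of $P$, so it eliminates all candidates at once. Without this renaming, each block costs $\cO(q\log q)$ and the total becomes $\cO(n\log q)$ rather than $\cO(n)$. In short, you state the combinatorial reduction correctly, but both pieces of the actual algorithmic content --- making each block cost genuinely $\cO(q)$ --- are missing from your proposal.
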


\begin{proof}
    We conceptually split $S$
    into a set $\mathcal S'$ of $\cO(n/q)$ fragments of $S$ of length $4q$, each overlapping by $2q$ positions (the last fragment may be shorter).
    Then, by \Cref{lem:cover-combination} and \Cref{obs:seed-of-substrings}, we have that any border $C$ of length at most $q$ of $S$ is a cover of $S$ if and only if $C$ is a seed of all fragments in $\mathcal S'$.
    
    This fact yields a simple streaming algorithm.
    We start by reading $P:=S[1\dd q]$ in memory. The prefixes of $P$ will be our \emph{candidates}.
    We also insert these $q$ letters in $\cO(q)$ total time in a dynamic dictionary $\mathcal{D}$ supporting $\cO(1)$-time worst-case look-ups~\cite{DBLP:journals/jacm/BenderCFKT23}.
    We then read $S[q+1\dd n]$ from left to right, always storing the last $4q$ letters in memory. 
    By using $\mathcal{D}$, we can assume that $S[q+1\dd n]$ consists only of letters occurring in $P$ (otherwise $S$ cannot be coverable by any prefix of $P$) and that these letters are mapped onto the range $[q]$. Every time we have a fragment $F$ from $\mathcal S'$ in memory, we compute the seeds of $F$ by employing the algorithm of \Cref{thm:fast-seeds}, which takes $\cO(q)$ time.
    In accordance with the $\cO(q)$-size representation of the computed seeds~\cite{DBLP:conf/soda/KociumakaKRRW12,DBLP:conf/esa/Radoszewski23}, we can check whether each of the $q$ candidate prefixes is a seed, by first constructing the generalized suffix tree~\cite{DBLP:conf/focs/Farach97} of 
    $P$ and $F$ in $\cO(q)$ time (thus finding which prefixes of $P$ occur in $F$), and then checking whether each of the candidate prefixes $P[1\dd i]$ which occur in $F$, for $i\in[q]$, is a seed of $F$ in $\cO(1)$ time per candidate. In addition to processing all fragments of $\mathcal S'$, we also need to check whether any of the remaining prefix candidates is a suffix and thus a border of $S$. We achieve this simply by computing the borders of string $P\$L$ in $\cO(q)$ total time~\cite{DBLP:journals/siamcomp/KnuthMP77}, where $\$$ is a letter not in $\Sigma$ and $L:=S[n-q+1 \dd n]$. The total time is thus $\cO(q \cdot n/q) = \cO(n)$.
\end{proof}

\bibliographystyle{alpha}
\bibliography{references}

\end{document}